\def\BibTeX{{\rm B\kern-.05em{\sc i\kern-.025em b}\kern-.08em
    T\kern-.1667em\lower.7ex\hbox{E}\kern-.125emX}}
\newtheorem{lemma}{Lemma}
\begin{document}

\title{Read-Uncommitted Transactions for \\ Smart Contract Performance
% \thanks{Thanks to our sponsors and the University of Central Florida}
 \thanks{This work was funded by the National Science Foundation (NSF) under Grant Numbers 1717515 and 1740095.}
}

\iffalse
\author{\IEEEauthorblockN{1\textsuperscript{st} Victor Cook }
\IEEEauthorblockA{\textit{Computer Science Department} \\
\textit{University of Central Florida}\\
Orlando, USA \\
victor.cook@knights.ucf.edu}
\and
\IEEEauthorblockN{2\textsuperscript{nd} Zachary Painter}
\IEEEauthorblockA{\textit{Computer Science Department} \\
\textit{University of Central Florida}\\
Orlando, USA \\
zacharypainter@knights.ucf.edu} 

\\

\and
\IEEEauthorblockN{3\textsuperscript{rd} Christina Peterson}
\IEEEauthorblockA{\textit{Computer Science Department} \\
\textit{University of Central Florida}\\
Orlando, USA \\
clp8199@knights.ucf.edu}
\and
\IEEEauthorblockN{4\textsuperscript{th} Damian Dechev}
\IEEEauthorblockA{\textit{Computer Science Department} \\
\textit{University of Central Florida}\\
Orlando, USA \\
dechev@cs.ucf.edu}
}
\fi

\author{\IEEEauthorblockN{Victor Cook\IEEEauthorrefmark{1},
Zachary Painter\IEEEauthorrefmark{2},
Christina Peterson\IEEEauthorrefmark{3},
Damian Dechev \IEEEauthorrefmark{4}}\\
\IEEEauthorblockA{\textit{Computer Science Department},
\textit{University of Central Florida}\\
Orlando, USA \\
\IEEEauthorrefmark{1}victor.cook@knights.ucf.edu,
\IEEEauthorrefmark{2}zacharypainter@knights.ucf.edu,\\
\IEEEauthorrefmark{3}clp8199@knights.ucf.edu,
\IEEEauthorrefmark{4}dechev@cs.ucf.edu}
}

\maketitle

\begin{abstract}
Smart contract transactions demonstrate issues of performance and correctness that application programmers must work around.  Although the blockchain consensus mechanism approaches ACID compliance, use cases that rely on frequent state changes are impractical due to the block publishing interval of $O(10^1)$ seconds.  The effective isolation level is \textsc{Read-Committed}, only revealing state transitions at the end of the block interval. Values read may be stale and not match program order, causing many transactions to fail when a block is committed.  
This paper perceives the blockchain as a transactional data structure, using this analogy in the development of a new algorithm, \textit{Hash-Mark-Set} (HMS), that improves transaction throughput by providing a \textsc{Read-Uncommitted} view of state variables.  HMS creates a directed acyclic graph (DAG) from the pending transaction pool. The transaction order derived from the DAG is used to provide a \textsc{Read-Uncommitted} view of the data for new transactions, which enter the DAG as they are received.
An implementation of HMS is provided, interoperable with Ethereum and ready for use in smart contracts.  Over a wide range of transaction mixes, HMS is demonstrated to improve throughput.  A side product of the implementation is a new technique, \textit{Runtime Argument Augmentation} (RAA), that allows smart contracts to communicate with external data services before submitting a transaction.  RAA has use cases beyond HMS and can serve as a lightweight replacement for blockchain oracles. 
\end{abstract}

\begin{IEEEkeywords}
Blockchain, Smart contracts, Concurrent algorithms, Transaction throughput
\end{IEEEkeywords}

\section{Introduction} 
Blockchains rely on a consensus mechanism to agree upon the sequencing of client transactions in a block, committing transactions as a group to the distributed ledger. \textit{Smart contracts} are the interface to process client requests and send transactions to the blockchain peer network.  A block of transactions must be validated to ensure that the sequence is consistent.  All peers on the network perform the validation step by re-executing the transactions within the block and checking that the initial and final states match, introducing latency.  

Latency resulting from the publishing and validation of a block decreases the success rate for the transactions in the block due to the possibility of stale reads of state variables, also known as \textit{storage variables}. Changes to storage variables are only visible after they are committed to a published block.  This isolation level of intra-block transactions is called \textsc{Read-Committed}.  Transactional reads of storage variables can become outdated while waiting on the validation step since other published blocks may update the storage variables, leading to transaction failure.  Additionally, since read operations can only access the published storage variable value, intra-block changes can also cause a transaction to fail due to a stale read. 

A smart contract transaction is a concurrent method, often with semantic dependencies. The way block publishing commits multiple smart contract transactions simultaneously is analogous to the way a transactional data structure~\cite{herlihy_koskinen_2008,zhang_dechev_2016} commits multiple concurrent methods in what appears to be a single atomic step.  Using this analogy, the blockchain is a blind transactional data structure that selects and sequences concurrent method calls without regard for their semantics, causing many to fail due to the restrictive \textsc{Read-Committed} isolation level. An ideal algorithm for blockchain transactions would consider transaction semantics and include all related transactions as a series in a block commit.  

In this paper, we present \textit{Hash-Mark-Set} (HMS), an algorithm that increases the throughput of smart contract transactions by providing a \textsc{Read-Uncommitted} view of the storage variables. HMS organizes the pool of pending transactions (TxPool) on specific storage variables in a directed acyclic graph (DAG) that establishes an ordering among the transactions and enables an uncommitted view of the storage variables to be retrieved.  HMS reduces transactional failures because the \textsc{Read-Uncommitted} view increases the likelihood that a transaction has consistent inputs. Latency is also reduced because concurrent actors will no longer need to wait until a block is committed to see a change in storage variables that is likely to be committed in the next block or two. We integrate HMS into smart contracts through \textit{Runtime Argument Augmentation} (RAA), our proposed technique that allows smart contracts to communicate with external data services prior to sending a transaction. 

An interoperable implementation of the solution is provided and demonstrated on the Ethereum blockchain.  State throughput, defined in Section~\ref{SubSection:StateThroughput} as throughput of \textit{successful} blockchain transactions, increases by a factor of five across a range of transaction mixes.  By adding the cooperation of blockchain miners, state throughput increases by an order of magnitude.

We make the following contributions:
\begin{enumerate}
\item Present \textit{Hash-Mark-Set} (HMS), an algorithm organizing a pool of pending transactions that share state variables.
\item Introduce \textit{Runtime Argument Augmentation} (RAA), a novel technique for smart contracts to communicate with external data services.
\item Demonstrate improvements to latency and state throughput when HMS provides a \textsc{Read-Uncommitted} view to smart contract clients. 
\item Provide an interoperable implementation for Ethereum:\\ https://github.com/area67/sereth
\end{enumerate}

\section{Background}

A few issues specific to the blockchain are discussed in this section.

\subsection{Blockchain Transactions}
\label{Subsection:Blockchain}
A \textit{blockchain} is a distributed ledger maintained by one or more peers following a communication protocol and agreeing on a consensus mechanism.  The ledger is written in chunks called \textit{blocks} that are linked in a practically unforgeable cryptographic \textit{chain}, replicated among many peers to avoid dependence upon a single entity. 
State variables that are recorded on the blockchain are called \textit{storage} variables in Ethereum.  
Bitcoin was the first blockchain \cite{Nakamoto2008-wx}, providing transactions on a store and exchange of value, \textit{i.e.} a currency.  A transaction is a concurrent method call that if successful, changes the state of the ledger. A block may contain from zero up to a finite number of transactions, typically $O(10^1)$ to $O(10^3)$.  Well known database transaction models such as ACID and BASE are applicable to the blockchain\cite{Tai2017-bt}, motivating our use of isolation levels \textsc{Read-Uncommitted} and \textsc{Read-Committed}.   

\subsection{Concurrent Smart Contracts}
Going beyond exchange of value, later blockchains added the ability to program arbitrary instruction sequences in a transaction.   Their programming languages are Turing complete \cite{Buterin2013-ej} and their programs are called \textit{smart contracts} \cite{Szabo1997-gi}.  Concurrency is framed in the words of Sergey and Hobor, ``Accounts using smart contracts in a blockchain are like threads using concurrent objects in shared memory''  \cite{Sergey2017-zx}.  Herlihy endorsed this line of reasoning in a keynote address \cite{Herlihy2017-av}, exhorting concurrency researchers to ``civilize'' the blockchain.  
    
Invoking a smart contract function that \textit{may} change ledger state creates a transaction and sends it to the network. 
It should be noted that some smart contract functions, designated \textit{pure} or \textit{view}, cannot change ledger state and they do not create transactions.
%A transaction sent by a smart contract may include multiple instructions and state changes.
The unprocessed transaction pool of pending transactions is referred to as the \textit{TxPool}.
The network of peers is a concurrent system and it follows that its incoming transactions, found in the TxPool, is a concurrent history.  
%The real time ordering of a concurrent history is non-deterministically described in terms of ranges and partial ordering \cite{Degano1987-gv}.  
The \textit{real time ordering} of a concurrent history is a total ordering over the transactions in a concurrent history such that transaction $T_1$ is ordered before transaction $T_2$ if $T_1$ is received by the TxPool before $T_2$.

\subsection{Miner Privilege}

On Ethereum, Bitcoin and many other blockchains, the inclusion and sequencing of transactions in a block does not follow real time order,
%is not algorithmically stable, 
rather transactions are arranged in a total order that is arbitrary and subject to the same economic incentives that drive blockchain progress \cite{Wang2018-ar}.  This is called the \textit{block order}. Special peers, called \textit{miners} have the privilege of deciding what goes into a block and in what order.  Each transaction is isolated and a miner generally has no way of knowing if one may depend upon another, so the rules for selecting transactions are flexible.  Miners generally favor transactions with higher fees, but they may favor some peers, including themselves.  They may use altruistic criteria such as including only small transactions or those from peers with low bandwidth.  

The discretion given to miners in the protocol works as if the scheduler of a CPU could favor particular threads.  Ethereum miners read the TxPool grouped by peer addresses (aka threads) with transactions ordered by a counter called a \textit{nonce}.  Miners may favor an address and include its transactions before another peer without regard for the real time order in which they were received.  Miners may refuse to include any transactions sent from particular addresses.  But a miner may not commit a transaction from a given address to a block out of nonce order. This means that blockchain transactions from the same address are executed in the order they are sent, while the order of transactions from different addresses is not defined.
Since a blockchain transaction is a concurrent method, we can describe this behavior as being equivalent to \textit{sequential consistency}, a correctness property such that history of methods is equivalent to a legal sequential history, and all methods take effect in program order~\cite{herlihy2011art}.
%This is the definition of \textit{sequential consistency}. 

The TxPool is shared by peers on the network, including miners.  Intuitively, if communication were instantaneous, all peers would see the same TxPool, and the order in which the transactions were received would match their real time ordering, i.e. the order in which they were sent.  Miner privilege would still allow the transactions to be placed in a block in an order different from the real time order. 
The outcome of our \textsc{Read-Uncommmitted} view of state is subject to network synchronization and miner privilege.  Information about the TxPool is not available to the smart contract as it submits transactions.  

\subsection{Block Publishing and Validation}
\label{Subsection:BlockPublishing}
Blocks of selected transactions are committed all at once in a super transaction called \textit{block publishing}.
Transactions are interpreted sequentially within a block according to the block order, using the previous ledger state (block) as the initial context.
%The outcome of a group of related transactions is determined by the sequence in which they are included in a block, the \textit{block order}.
%The next section explains how this order is determined. 
Changes to storage variables are not visible until they are committed to a block and the block is published.
The changes to storage variables during the interval of block publication are called \textit{intra-block} changes.  
Transactions within the block are affected by the intra-block changes, but post-publication transactions read the block final value of a storage variable from the previous block, none of the current changes. 
%Transactions within the block are affected by the intra-block changes, but post-publication transactions read the block final value of a storage variable. 
%A read operation will see only the value in the previous block, none of the current changes. 
Once published there is no opportunity to re-order the concurrent methods. 
These values were read from the previous block, published \textit{block interval} seconds ago. %TODO: Need a better definition of block interval
The block interval defines the latency.  

Block publishing is effectively a read lock until the next block is committed.  
Dirty reads are not allowed. 
In database terms the isolation level of intra-block transactions is \textsc{Read-Committed}.  
To accept a published block every peer must perform \textit{block validation}, the task of checking that the block is consistent with the state of the network. 
Transactions committed to a block must be consistent in that they must include the effects of all previous transactions.
%Validation includes the effect of previous transactions in the same block, which are unknown at the time a transaction is sent.  
%At the moment it is sent, a transaction may know the state only up until the previous block.  
%At the moment the transaction is processed, it will succeed or fail based on the state at processing time.  
The process of peers redundantly validating transactions in a block is called transaction \textit{replay}.  
Block publishing and validation takes a significant amount of time $O(10^1)$ to $O(10^2)$ seconds, creating latency.
%The interval to publish a block averages about 15 seconds. 
%Latency in revealing a state change is considerable, as it must be transmitted, mined, validated and published by peers before it becomes visible.  This happens rapidly on an efficient network such as Ethereum, but still takes at least the block interval. 
%During that time the state may change several times, but only the final state will be visible after publication.  

%TODO: Need to differentiate between a rejected transaction and a failed transaction
Since only the final state of the block is published, intermediate states become invisible without a detailed replay of the transactions in the block, something that a typical smart contract cannot do.  
The loss of intermediate states during a block update is a consequence of the \textsc{Read-Committed} view of state variables.
This low isolation level avoids blocking but may allow a great number of transactions to be rejected later as inconsistent.  
Transactions that seem valid when submitted are rejected because the values on which they are based were stale.
%If the first transaction changes a value upon which others depend, they may be rejected. 
The number of transactions that are rejected impacts state throughput. 
Where state changes are frequent and there are many transactions in the pool to be interleaved in a block, a large percentage of transactions fail.  
To say a transaction failed means that it would have violated the consistency of the sequential history of the block in which it is embedded. 
To keep the sequential history of the block consistent, the transaction is included in the block, but has no effect on the system state. 
%To keep the sequential history of the block consistent, the transaction was invalidated.
%It was included in the block, but had no effect on the system state. 
In database terms the transaction was \textit{rolled back}.  
A principle cause of failure is the high latency imposed on reading changes to persistent storage variables.  

\subsection{Blockchain Oracles}
A characteristic of the blockchain is that security concerns related to the adversarial distributed environment impose restrictions on information transfer.  Unassisted, smart contracts operate in a bubble, allowed to view only public blockchain state variables via getter functions and not allowed to call any outside sources of information.  The discussion in Section~\ref{Subsection:BlockPublishing} about peers replaying blocks can explain this.  Since all peers must replay and validate the block, they all must see the same state changes.  If a contract is using an outside source of information, no matter how reliable, it may change with time or due to corrections or it may become unavailable.  This would cause some peers to see a different state than others, and the block could not be validated.  The problem can be solved with a smart contract that mediates a secure and verifiable connection to external data feeds \cite{Zhang2016-pq}. Such a service is also called a \textit{blockchain oracle} \cite{Xu2016-mm}, \cite{Bartoletti2017-vq}.

\subsection{Challenging Use Cases}
\label{SubSection:UseCases}
%Throughput and latency are the top technical limiting factors to blockchain applications \cite{Swan2015-re}. 
Blockchain performance, measured in terms of transaction throughput and latency, is a limiting factor for many use cases \cite{Swan2015-re, Gatteschi2018-uf, Staples2017-ri, Vukolic2016-ls}.   Latency and throughput are considered together in this paper because the \textsc{Read-Committed} latency of state variable limits the throughput of successful transactions.  This ubiquitous blockchain latency has been dubbed, `the long system freeze'' \cite{Eyal2016-fr}.  Our example use case is a decentralized market to buy and sell assets, a core use case driving blockchain research and investment.  This example also represents the general case of concurrent actors reading a time sensitive shared state variable.

Say that trading opens at a certain price, visible to all buyers.  Orders are received on the network to be processed.  To simplify, orders must be at the exact price, \textit{i.e.} there are no limit or market orders.  The price changes frequently and unpredictably due to market dynamics.  If 100 orders are received at the published price near the start of a block interval and the price changes after the first order, then only one will be accepted.  Blockchain correctness (safety, consistency) is preserved by the expedient of invalidating 99 of the 100 transactions in this example, clearly an inefficient mechanism.  

Due to miner privilege, the first order submitted in time may not be the first included in the block. Progress of the system cannot be fair in any case because there is not enough information in the TxPool on which to base a real time order of the requests from different peers.  Even with such information, miners are not bound to prevent starvation, quite the contrary they may cause it.  Information is also hidden from the buyers querying the smart contract for the price.  Block replay is not available within the smart contract.  Unless it is separately analyzed, 98 of the 99 price changes are invisible to participants and valuable market information about intermediate price changes is lost.  The arbitrary transaction priority combined with read latency also creates a vulnerability known as \textit{blockchain frontrunning} \cite{Swende2017-px}.

\section{Methodology}
This paper presents Hash-Mark-Set (HMS), an algorithm that overcomes the limitations of the \textsc{Read-Committed} isolation level by providing a \textsc{Read-Uncommitted} view of storage variables.  The \textsc{Read-Uncommitted} view alleviates the problems in the example of Section~\ref{SubSection:UseCases}.  Clients can observe partial changes within the block prior to publishing, reducing the chance that a transaction will fail due to a stale read.  The \textit{Mark} in HMS also establishes a partial intra-block order that a cooperating miner can enforce.
Such cooperation is reasonable given financial incentives that might be offered by decentralized asset exchanges.  

HMS provides a \textsc{Read-Uncommitted} view by maintaining the transactions in a directed acyclic graph (DAG) that represents an ordering among the transactions in the unprocessed transaction pool, \textit{TxPool}, and applying a topological sort to the longest branch to retrieve the value of an unpublished storage variable. To enable the \textsc{Read-Uncommitted} view to be accessible through smart contracts, we propose Runtime Argument Augmentation (RAA), our proposed technique that modifies the Ethereum Virtual Machine (EVM) interpreter to apply the HMS algorithm and access the value of an unpublished storage variable. The RAA technique is made available to users through our proposed smart contract \textit{Sereth}.  

To evaluate the performance benefits of our proposed methodology, we present a new metric, \textit{state throughput}, which measures the throughput of successful transactions. State throughput disregards failed transactions in the throughput measurement, which provides a better representation of the rate at which state changes are made in comparison to raw throughput. In the following subsections, we define state throughput, provide the Sereth smart contract application programming interface, and explain HMS and RAA, the two innovations of this paper.

\subsection{State Throughput}
\label{SubSection:StateThroughput}

Blockchains are different from databases in the following way: failed transactions are included in the persistent shared ledger.  Because a block may include a large percentage of failed transactions, raw throughput of transactions per second is not an adequate measure of performance.  In the example described in Section~\ref{SubSection:UseCases}, raw throughput was 100 per interval, but 99 of 100 transactions fail.   In a database these rolled back transactions would not count in throughput, but in a blockchain they are included in the block.   A new metric, \textit{state throughput}, $T_{state}$, is defined here as the product of the raw throughput and the ratio of transactions included in a block that successfully make state changes.  State throughput divided by raw throughput yields the transaction efficiency $\eta$. 
\begin{equation}
\frac{T_{state}}{T_{raw}} = \eta 
\end{equation}

Transactions in the TxPool form a concurrent history, with a non-deterministic outcome.  We observed that transaction failure can be reduced by obtaining a view of state that is more likely to be consistent at the moment the transaction is committed to a block.  To maximize $\eta$, transactions are organized to provide a predictive view of state, ordering transactions such that the order closely matches the real time order in which the transactions were received. 

\subsection{Sereth Smart Contract}
Our implementation of HMS for Ethereum is called Sereth, a variation of Geth, the name of the standard client.  Sereth is implemented as an interoperable Ethereum client that can be substituted for one or more peers in any standard Ethereum network, public or private.
The Sereth smart contract shown in Listing~\ref{Listing:SerethContract} manages the price and accepts the $set$ and $buy$ transactions from addresses on the blockchain.  The $mark$ and $get$ functions are read only.   They do not create transactions but are used to return the intra-block state that will be used in $set$ and $buy$.  This intra-block state view uses RAA to get the results of the HMS algorithm.  The values are written into the function arguments using RAA and then returned to the calling address.

\begin{lstlisting}[caption={Sereth smart contract.},label={Listing:SerethContract},captionpos=t,float, basicstyle=\fontsize{7}{7}\selectfont\ttfamily]

pragma solidity ^0.4.24;

contract Sereth {
...
// Mark, Set and Get are methods on state variables
// managed by the Hash-Mark-Set algorithm.

    function mark(bytes32[3] raa) 
            private pure returns(bytes32) {
        return raa[1];
    }
    
    function set(bytes32[3] fpv) public {
        // If mark is valid, set new mark and value.
        if (keccak256(fpv[1]) == keccak256(p[1])) {   
            nSet++;
            p[0] = bytes32(msg.sender);
            p[1] = keccak256(fpv[1], fpv[2]);
            p[2] = fpv[2];
        }
    }
    
    function get(bytes32[3] raa) 
            public pure returns(bytes32) {
        return raa[2];
    }

// Function buy() demonstrates a dynamic pricing use case 
// for the Hash-Mark-Set transactional data structure.

    function buy(bytes32[3] offer) public {
        // If mark and price match then buy() succeeds.
        if ((keccak256(offer[1]) == keccak256(p[1])) &&
            (keccak256(offer[2]) == keccak256(p[2]))) {
            nBuy++;
            p[0] = bytes32(msg.sender);
        }
    }
}
\end{lstlisting}

\subsection{Hash-Mark-Set}

Hash-Mark-Set takes advantage of an underutilized communication channel among the peers on a blockchain, the transaction pool (TxPool).  We created a smart contract, \verb+Sereth.sol+, to manage the state variables.  In Sereth, function arguments are formatted so they contain three key elements within the transaction, $address$, $mark$, and $value$. The $address$ field contains the address of the sender of the transaction. The $mark$ field contains a Keccak256 hash \cite{Patil2015-hs} which solidifies a transactions place in a series of Sereth transactions. The $value$ field indicates how the sender would like to modify the state variable. Together, these elements are referred to as a transaction's $AMV$. To create a transaction using the Sereth contract, one must pass in three parameters: $flag$, $previous\_mark$, and $value$. These parameters are referred to as the $FPV$. The $FPV$ is easily visible as a string of bytes within the transactions $input$ field.   

We define a transaction's $mark$ such that given $Txn_1$ which follows $Txn_0$, $Txn_1.mark = Keccak256(Txn_0.mark, Txn_1.val)$. This creates a sequentially consistent ordering between any number of transactions in what we call a $series$. To create a $series$, the $FPV$ of each transaction in the TxPool is extracted from their respective $Data$ fields. By matching the $previous\_mark$ of a transaction with the $mark$ of a different transaction, we can determine a strict order of all Sereth transactions in the current TxPool.   This provides the smart contract with a \textit{Read-Uncommitted} view of the intra-block state.  In addition, because every state change is linked by a unique hash that includes the value, multiple state changes sequenced in the atomic block update are preserved. 

Algorithm~\ref{alg:sereth} shows the HMS algorithm as implemented on the Ethereum blockchain. Users interact with the algorithm through an Ethereum contract. We refer to line $x$ of algorithm $A$ as $A$:$x$.

\begin{algorithm}
\caption{Transaction Serialization Algorithm} \label{alg:sereth}
\begin{algorithmic}[1]
\Procedure{HashMarkSet(input)}{} \Comment{Serialize a blockchain transaction pool}
\State $RAA \gets input$
\State $txnList \gets$ \textsc{Process}(TxPool) \label{l:txnList} \Comment{Filter TxPool}
\If{len($txnList$) $== 0 $} \label{sereth:listLength}
	\State $RAA \gets specialValue$
	\State \textbf{return}
\EndIf
\State $series \gets$ \textsc{Series}($txnList$) \Comment{Create series}
\State $RAA \gets$ \textsc{copy}($series.tail.FPV$) \label{l:RAA}
\EndProcedure
\end{algorithmic}
\end{algorithm}

\begin{algorithm}
	\caption{Process Transactions}\label{alg:serethParse}
	\begin{algorithmic}[1]
		\Procedure{Process(TxPool, input)}{}\Comment{Filter TxPool for HMS transactions}
		\State $filteredList[]$ 
		\For{$txn \in TxPool$}
			\If{\textsc{Signature}($txn$)$~== ``set"$ \& \textsc{Success}($txn$)}
            	\State $txn.FPV \gets txn.input$
				\State $txn.mark \gets$ 
				\Statex \hspace{8em} Keccak($txn.FPV[1], txn.FPV[2]$) 
				\State $filteredList.push$(\textbf{new} Node($txn$))
			\EndIf
		\EndFor
		\State \textbf{return} $filteredList$
		\EndProcedure
		
		\Procedure{Success(txn)}{}\Comment{Determines if a transaction succeeded or not}
		\State $FPV \gets txn.input$
		\If{$FPV[0] == successFlag$ || $FPV[0] == headFlag$}
			\State \textbf{return} $true$
		\EndIf
		\State \textbf{return} $false$
		\EndProcedure
	\end{algorithmic}
\end{algorithm}

A call to HashMarkSet() is made from the EVM interpreter when the transaction being processed has a function signature that matches that of a Sereth transaction. The $RAA$ variable on line~\ref{alg:sereth}:\ref{l:RAA} represents the storage variable value obtained using the RAA technique. We first extract the $RAA$ from the given $input$ field of the transaction we are processing. This process is simple, as each element is stored in a contiguous 32 bytes within $input$. By writing the result of HashMarkSet() to $RAA$, the result will be made visible within the contract's execution.

Algorithm \ref{alg:serethParse} details how the current transaction pool is filtered and then returned to the main function for handling. 

For each transaction in the pool, we check that the function signature is equal to one of the write functions from our HMS contract. Additionally, we check the first 32 bytes of the FPV for a flag indicating one of several possible states for the transaction. Due to this filtering only a small percentage of the TxPool requires processing, so the overhead of HMS is relatively small.

First, the transaction may be one of the first HMS transactions that appeared during the current block. In this case, we consider the transaction a $head~candidate$, meaning that it or another transaction with the same flag will serve as the head of the serialized list of transactions for the current block. This allows us to easily continue the list from the previous block without being able to view the state variable. The second possible state indicates that the transaction is not a head candidate, and at the time of the transaction's submission, it was found to be the successor to the current tail of the series. If a transaction contains neither of these flags, it is considered rejected and is not included in the list of relevant transactions. If a transaction is accepted, The $FPV$ is then extracted from the $input$ field. The $FPV$ contains $previous\_mark$ and $value$, which are the two values needed to calculate the $mark$ of a transaction and determine its place in the series.  A node is created from the transaction for later inclusion in a linked data structure.

Once $txnList$ has been populated by transactions from the TxPool on line~\ref{alg:sereth}:\ref{l:txnList}, we check on line \ref{alg:sereth}:\ref{sereth:listLength} if the list is empty. If so, the submitted transaction is the first Sereth transaction sent in the current block, and the way to know if it matches the previous mark is to check the state variable within the contract. In this case, a flag is written to the $data$ field, which will be visible to the contract. The contract value will be written in the last 32 bytes of the transaction FPV by the sender. 

If the list contains one or more transactions, then we know that there already exists at least one series for the current block. Algorithm \ref{alg:serethSeries} contains the functions which return the most valid series from a list of Sereth transactions. 

\begin{algorithm}
	\caption{Create a Series}\label{alg:serethSeries}
	\begin{algorithmic}[1]
		\Procedure{Series(txnList)}{}\Comment{Create a serialized list from a set of transactions}\label{proc:series}
		\For{$txn \in txnList$} \label{l:nestedforloop}
			\For{$txn2 \in txnList$}
				\If{$txn.mark == txn2.FPV[1]$} \label{l:ifstatementseries}
					\State $txn2.prevTxn \gets txn$
					\State $txn.nextTxns.push(txn2)$
				\EndIf
			\EndFor
		\EndFor
		
		\State $highestDepth \gets 0$
		\State $longestSeries \gets nil$
		\For{$txn \in txnList $\ such that\ $ txn.FPV[0] == headFlag$} \label{series:head}
			\State $depth \gets 1$
			\State $path \gets [txn]$
			\State $maxDepth \gets 0$
			\State $maxPath \gets []$
			\State \textsc{DeepestBranch}($txn, depth, \&maxDepth,$ \Statex \hspace{15em} $path,$ $maxPath$)
			\If{$maxDepth > highestDepth$}
				\State $highestDepth \gets maxDepth$
				\State $longestSeries \gets maxPath$
			\EndIf
		\EndFor
		
		\State \textbf{return} $longestSeries$
		\EndProcedure
		
		\Procedure{DeepestBranch(head, depth, path, maxDepth, maxPath)}{}\Comment{Recursively find deepest branch}
		\If{len($head.nextTxns$)$~== 0$}
			\If{$depth$ > $maxDepth$}
				\State $maxDepth = depth$
				\State $maxPath = path$
			\EndIf
			\State \textbf{return} \label{l:terminate}
		\EndIf
		
		\For{$txn \in head.nextTxns$} \label{l:forloopdeepest}
			\State $path.push(txn)$
			\State \textsc{DeepestBranch}($txn, depth+1, path,$ \Statex \hspace{13em} $maxDepth,$ $maxPath$)
			\State $path.remove(txn)$
		\EndFor
		\EndProcedure
	\end{algorithmic}
\end{algorithm}

Line \ref{alg:serethSeries}:\ref{proc:series} refers to Series(), which iterates through each transaction in the list of Sereth transactions and forms graph relations between all transactions with corresponding mark/value hashes. Due to the uncertain nature of concurrency, it is possible for a transaction to have multiple potential successors, but only one predecessor. 

At line \ref{alg:serethSeries}:\ref{series:head} we locate from multiple potential head nodes the one that produces the deepest graph. From that graph, the deepest branch is our series. This logic mirrors that of the blockchain, in which branches are resolved by taking the longest branch.

\begin{figure}[ht]
\includegraphics[width=0.47\textwidth,keepaspectratio]{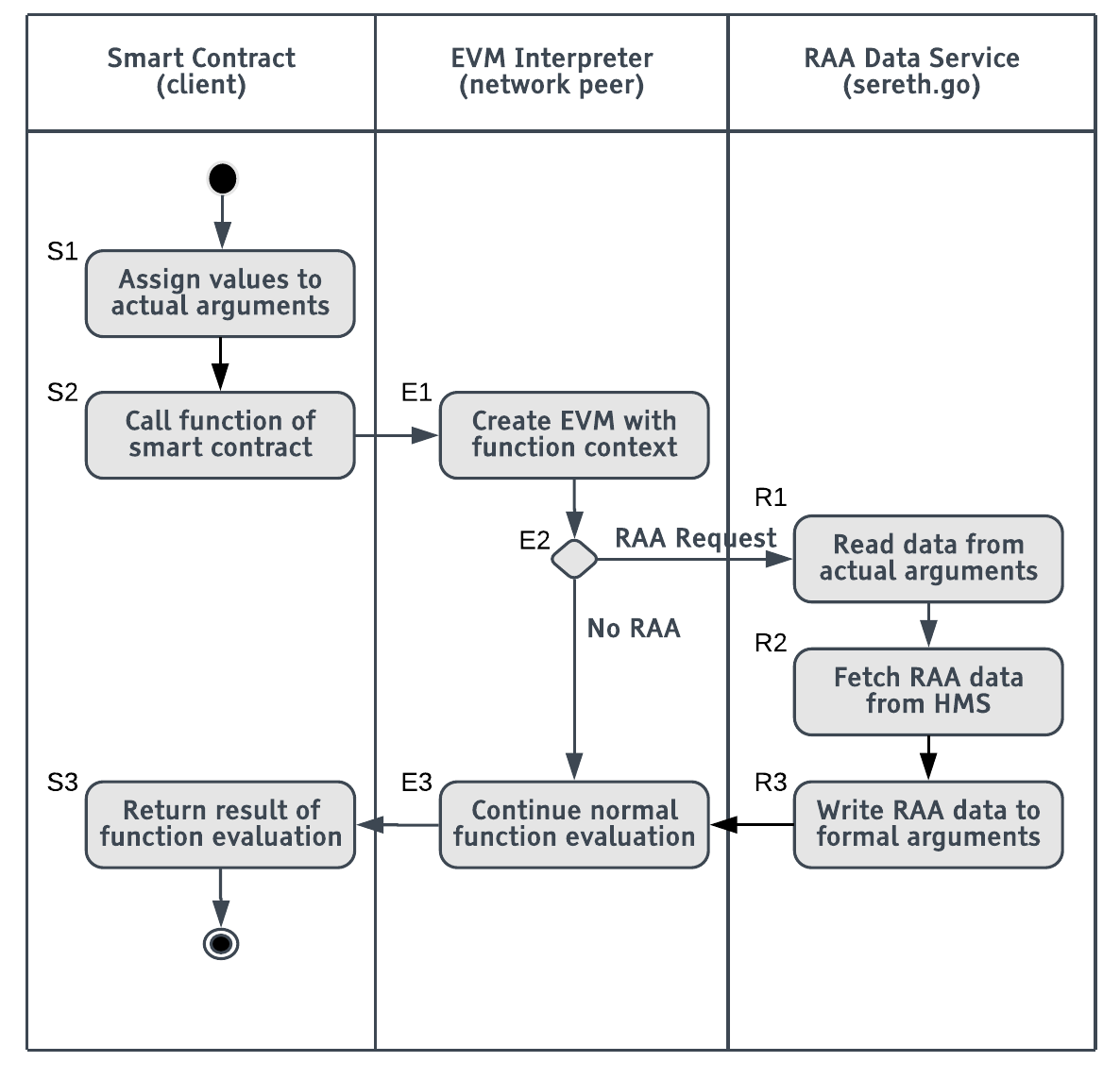}
\caption{RAA activity diagram}
\end{figure}

\subsection{Runtime Argument Augmentation}
\label{SubSection:RAA}
Blockchain oracles provide a secure and verifiable medium for smart contracts to access external data feeds, but still suffer from stale reads due to latency.
In our implementation of HMS it became clear that a traditional oracle would not satisfy the requirement for intra-block data. To overcome the limitations of oracles, we propose \textit{Runtime Argument Augmentation} (RAA), a technique that provides data to a smart contract by using the argument list as a channel to pass information.  RAA is a modification to the Ethereum Virtual Machine (EVM) interpreter, written in Golang.  Figure 1 is an activity diagram showing the modified processing.  In activity $E2$ the EVM interpreter checks to see if a function is requesting external data items using RAA.  If so, the interpreter calls the RAA provider in activities $R1$ to $R3$, implemented as a Golang service compiled into the EVM.  Data is obtained from the RAA provider and written into the function arguments.  The data types of the items being requested must match the data types of the arguments.  In $E3$ the function returns the result of evaluation using the modified arguments to the smart contract for use in activity $S3$.  RAA is flexible: any computation can be accomplished by the RAA provider, and the information can flow in both directions.  RAA is fast because it is written as an extension of the EVM.  A smart contract using RAA is indistinguishable to unmodified clients running Geth, who merely see that arguments are passed in and a result returned.

There are some limitations.  RAA cannot be used to modify the arguments of a smart contract function that may send a transaction.   This is because transactions along with their inputs are cryptographically signed by the sending address, stored in parameters \verb+msg.hash+ and \verb+msg.sender+.  Without this protection a malicious Geth client could modify the inputs of a transaction, for example doubling the price offered for an item or changing the delivery address.  In testing the limits of RAA we found that the modified transactions would still be mined, but would not be accepted by peers who must validate the newly created block.  In order to use RAA information in a transaction, a smart contract or other blockchain actor calls the RAA function first, then uses the information provided to improve the subsequent transaction.  This is the process used to obtain the experimental results that follow.

\section{Correctness}
\label{Correctness}
\iffalse
Correctness of a concurrent history is demonstrated by transforming it to an equivalent legal sequential history according to a correctness condition such as serializability, linearizability or sequential correctness \cite{Herlihy1990-ms}.  This method hinges on finding correct sequential histories that include all the given concurrent methods.  If such a history does not exist, the concurrent history is rejected.  If a correct sequential history is inevitable, as may be shown using model checkers and other automated methods, the system generating the concurrent history is provably correct.

Transactions are selected from the TxPool and assembled into a block where they have a strict order, called the \textit{block order}.  Publishing the block therefore writes a sequential history created from the concurrent history in the TxPool.  When different concurrent actors, or \textit{addresses} send transactions their block order is determined by economic incentives and miner privilege, not transaction semantics.  Once published there is no opportunity to re-order the concurrent methods.  To ensure block consistency, and thus system correctness, the only remaining degree of freedom is to cancel the inconsistent transactions.
\fi

Concurrent systems are expected to satisfy correctness (safety) and progress (liveness) properties. Correctness is determined according to a defined correctness condition presented in literature~\cite{herlihy2011art}. HMS is designed for the sequential consistency correctness condition because miners are required to preserve the nonce order when committing a transaction from a given thread to a block. Since the nonce is a counter that reflects the sequential ordering of transactions issued by the same thread and a blockchain transaction is analogous to a concurrent method, the blockchain is inherently sequentially consistent. 
%TODO Lead into the purpose of the next lemma
In the following lemma we show that HMS generates a series that provides a sequentially consistent ordering of the transactions in the longest branch. The benefit of generating a series of transactions in the longest branch is that it offers the greatest potential for optimum state throughput.

\begin{lemma}
The series generated from HMS preserves a sequentially consistent ordering of transactions invoked in the longest branch of the directed acyclic graph.
\end{lemma}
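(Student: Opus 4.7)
The plan is to establish two properties of the series produced by \textsc{Series} in Algorithm~\ref{alg:serethSeries}: (i) that the order it induces is equivalent to a legal sequential history of the \texttt{set} transactions against the Sereth contract, and (ii) that within that order, transactions issued by the same address appear in nonce (program) order. Together these give sequential consistency as defined in the Background.

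First, I would formalize the structure of the series. By inspection of lines~\ref{l:nestedforloop}--\ref{l:ifstatementseries} of Algorithm~\ref{alg:serethSeries}, the DAG edge $T_i \to T_{i+1}$ is drawn exactly when $T_{i+1}.\mathit{FPV}[1] = T_i.\mathit{mark}$, and the longest branch returned is a path $T_0, T_1, \ldots, T_n$ starting from a node tagged with $headFlag$. I would show legality by induction on this path: the \texttt{set} function in Listing~\ref{Listing:SerethContract} commits $T_{i+1}$ precisely when the submitted previous mark equals the current on-chain mark, and the recurrence $T_{i+1}.\mathit{mark} = \textsc{Keccak}(T_i.\mathit{mark}, T_{i+1}.\mathit{val})$ from \textsc{Process} matches the storage update $p[1] \gets \textsc{Keccak}(\mathit{fpv}[1], \mathit{fpv}[2])$. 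Hence executing $T_0, \ldots, T_n$ sequentially against the pre-block state produces a legal contract history.

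Second, for program-order preservation, I would invoke the RAA mechanism of Section~\ref{SubSection:RAA}: each address obtains its $previous\_mark$ by calling \texttt{mark} (a read-only view over the current longest series) immediately before submitting its next \texttt{set}. Consequently, the $k$-th transaction issued by an address $A$ always carries as its FPV previous-mark the $\mathit{mark}$ field of the $(k{-}1)$-st transaction of $A$ (or the head mark, when $k{=}1$). Assuming collision resistance of Keccak256, the only node in the DAG whose $\mathit{mark}$ can equal that value is the $(k{-}1)$-st transaction itself, so the edge relation forces $A$'s $k$-th transaction to appear strictly after its $(k{-}1)$-st wherever both occur on the branch. This matches the nonce-order guarantee highlighted in the Miner Privilege subsection and yields sequential consistency.

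The main obstacle I anticipate is the program-order step, because HMS selects the \emph{longest} branch among possibly many head candidates, and a naive selection could in principle interleave transactions from a single address out of nonce order. The argument I sketched above avoids this by leaning on two facts stated earlier in the paper: that the Keccak-chained mark uniquely identifies predecessors, and that RAA gives each sender a consistent view of the current tail from which to derive its next mark. I would close by noting that sibling branches pruned by the \textsc{DeepestBranch} recursion correspond to transactions that would in any case be rejected by the on-chain mark check, so their omission does not threaten legality of the retained series.
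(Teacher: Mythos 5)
Your decomposition into (i) legality of the induced sequential history and (ii) per-address program-order preservation is more ambitious than what the paper actually argues. The paper's proof is considerably more modest: it observes that the adjacency list built on lines~\ref{l:nestedforloop}--\ref{l:ifstatementseries} encodes the predecessor/successor relation defined by the Keccak mark chain, that \textsc{DeepestBranch} only ever visits a successor after its predecessor (so any emitted path respects that relation), and that the depth counter guarantees the recorded path is the longest branch. It never reasons about the Sereth contract's on-chain mark check, about RAA, or about nonces; ``program order'' in the paper's proof means the order established in the adjacency list, not per-address nonce order. Your part (i) is a sound and arguably valuable addition (the induction matching $T_{i+1}.\mathit{mark} = \textsc{Keccak}(T_i.\mathit{mark}, T_{i+1}.\mathit{val})$ against $p[1] \gets \textsc{Keccak}(\mathit{fpv}[1], \mathit{fpv}[2])$ is correct), but it proves a legality property the lemma does not claim.

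The genuine gap is in part (ii), precisely at the obstacle you flagged. Your key step asserts that the $k$-th \texttt{set} issued by address $A$ ``always carries as its FPV previous-mark the $\mathit{mark}$ field of the $(k{-}1)$-st transaction of $A$.'' That is false in general: the \texttt{mark} call returns the mark of the \emph{current tail of the longest series}, which after interleaving may belong to a different address entirely, or may lie on a branch from which $A$'s $(k{-}1)$-st transaction has been pruned (the longest branch can switch head candidates or sub-branches between $A$'s two submissions). So the edge relation does not force $A$'s $k$-th transaction to be a descendant of its $(k{-}1)$-st, and collision resistance of Keccak256 does not rescue the claim --- it only guarantees uniqueness of whichever predecessor the previous-mark actually points to. Establishing nonce-order preservation on the selected branch would require an additional invariant about client behavior (every sender's prior transaction remains on the longest branch when it queries \texttt{mark}) that neither the algorithm nor the paper guarantees; the paper sidesteps this entirely by defining the preserved order to be the one induced by the mark chain itself.
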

\begin{proof}
For each transaction $T$ in the transaction pool, if the signature is a set operation, and $T$ is either a possible head candidate or is a successor to the current tail in the series, then $T$'s mark is updated by hashing the predecessor transaction's mark and value, and the list of transactions to be considered for the series is amended to include transaction $T$. If the length of the list of transactions is larger than one, then HMS generates a series of transactions by calling the \textsc{Series} function with the transaction list as input. It now must be shown that the generated series is both sequentially consistent and the longest branch. The \textsc{Series} function creates an adjacency list of all transactions in the transaction list such that a transaction $T_2$ that is a member of $T_1$'s list indicates that $T_2$ is a successor to $T_1$. The \textsc{Series} function then iterates through the potential head candidates and applies the \textsc{DeepestBranch} algorithm. Each recursive call to \textsc{DeepestBranch} will iterate through the list of successor transactions in the adjacency list and apply \textsc{DeepestBranch} to each successor transaction until a transaction with no successors is reached. At each recursive call to \textsc{DeepestBranch}, transaction $txn$ passed as an input parameter is amended to the path. Since the exploration of the adjacency list guarantees that all successor transactions are visited after a predecessor transaction, any path generated from \textsc{DeepestBranch} will be sequentially consistent because the program order established in the adjacency list is preserved. Since the depth at each recursive call of \textsc{DeepestBranch} is incremented by one, and a path that exceeds the maximum depth is recorded upon termination of the recursive calls, the final recorded path by \textsc{DeepestBranch} will be the longest branch within the adjacency list.
\end{proof}

Progress of the underlying blockchain (the computer) is assumed.  We focus here on the progress of smart contract methods using a view of state variables managed by HMS.  \textit{Lock freedom} is defined as ensuring that some concurrent actor makes progress, and this is true for the blockchain as a whole but not for an individual smart contract.  Miners may assign a low priority to a particular contract so it makes no progress.  At peak times, many more transactions are sent to the network than can be included in a block.  Transactions sent may be lost due to network failures, memory limitations or peers not replaying them.  Miners may refuse to include transactions for arbitrary reasons.  %Given a subset of transactions from an individual contract, the blockchain is not lock-free or wait-free, and HMS does not change this.
As a result, the progress guarantee provided by Ethereum is smart contract termination~\cite{amani2018towards,le2018proving}. Since the TxPool is a finite list of transactions, Algorithm~\ref{alg:serethParse} trivially terminates. Algorithm~\ref{alg:sereth} and Algorithm~\ref{alg:serethSeries} terminate given that the recursive function \textsc{DeepestBranch} terminates. We now show in the following lemma that \textsc{DeepestBranch} terminates.

%Progress is measured in blocks appended to the ledger at discrete time intervals.  Given there exists a tractable cryptographic solution to create the next block, progress is ensured by economic incentives.  Any peer can create a new block at any time, using a public algorithm implemented in open source code.  Although there is no requirement that the block must contain transactions the payment to the creator is greater if it does.  At peak times, many more transactions are sent to the network than can be included in a block.  There is no lock on sending transactions to any peer, although physical memory does limit the maximum number of transactions in the pool, so peers may replay different information, and transactions may be lost.

\begin{lemma}
\textsc{DeepestBranch} presented in Algorithm~\ref{alg:serethSeries} is guaranteed to terminate.
\end{lemma}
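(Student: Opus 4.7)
The plan is to argue termination by exhibiting a well-founded measure that strictly decreases along every recursive call of \textsc{DeepestBranch}. The natural candidate is the number of transactions reachable from the current \textit{head} parameter in the adjacency list, which is finite because the TxPool is finite and the adjacency list built in the nested loop on line~\ref{alg:serethSeries}:\ref{l:nestedforloop} contains only pointers into that finite pool.

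First I would establish three structural facts that are either immediate from the algorithm or already argued in the preceding lemma. (i) The transaction pool, and hence $txnList$, is finite, so the adjacency list of successor pointers assembled in \textsc{Series} has finitely many vertices and edges. (ii) The adjacency relation built on line~\ref{alg:serethSeries}:\ref{l:ifstatementseries} is acyclic: the edge $txn \to txn2$ requires $txn2.FPV[1] = txn.mark = \mathrm{Keccak}(txn.FPV[1], txn.FPV[2])$, so any cycle would yield a fixed point of the Keccak-based chaining, contradicting the collision-resistance assumption under which the series is built (equivalently, the \textsc{Series} construction is by definition the DAG alluded to throughout Section~III). Hence reachability from any vertex is a strict partial order on a finite set.

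Next I would perform the actual induction. Define $R(v)$ to be the set of vertices reachable from $v$ in the adjacency list. Because the graph is a finite DAG, $R(v)$ is finite, and for every successor $w \in v.nextTxns$ we have $R(w) \subsetneq R(v)$, since $v \in R(v) \setminus R(w)$ by acyclicity. Now argue by strong induction on $|R(v)|$ that \textsc{DeepestBranch}$(v, \ldots)$ terminates. In the base case, if $v.nextTxns$ is empty, the procedure executes the branch on line \ref{alg:serethSeries}:\ref{l:terminate} and returns immediately. In the inductive case, the for-loop on line~\ref{alg:serethSeries}:\ref{l:forloopdeepest} iterates over the finite list $v.nextTxns$ and on each iteration issues one recursive call whose first argument $w$ satisfies $|R(w)| < |R(v)|$; by the inductive hypothesis each such call terminates, and since the loop itself is over a finite collection, the enclosing call terminates as well.

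The main obstacle is purely expository rather than mathematical: making precise that the data structure passed as \textit{head} really lives in a finite DAG rather than just some recursively generated graph. Once facts (i) and (ii) above are pinned down, the recursion is a standard depth-first traversal of a finite acyclic digraph, and termination is essentially immediate from strong induction on the reachable set. I would therefore keep the body of the proof short and spend the bulk of the presentation on justifying acyclicity and finiteness so that the induction has a well-defined measure.
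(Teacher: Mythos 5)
Your proposal is correct and follows essentially the same route as the paper's proof: both arguments reduce termination to the finiteness of $txnList$ (inherited from the TxPool via \textsc{Process}) plus the acyclicity of the adjacency relation built on line~\ref{l:ifstatementseries}, and then observe that the recursion, which only descends to successors, must bottom out at a transaction with no successors on line~\ref{l:terminate}. The only differences are matters of rigor rather than substance --- you make the ``eventually reaches a sink'' step precise with an explicit well-founded measure $|R(v)|$ and strong induction, and you ground acyclicity in the collision resistance of the Keccak chaining, where the paper simply asserts that the computed marks establish an ordering.
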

\begin{proof}
The $txnList$ in the \textsc{Series} function is a finite list of transactions because it is a subset of the TxPool generated by the \textsc{Process} function. Therefore, each list within the adjacency list of transactions constructed by the nested for-loop on line~\ref{l:nestedforloop} of Algorithm~\ref{alg:serethSeries} will also contain a finite number of transactions. Since the $txn.mark$ computed by \textsc{Process} establishes an ordering among the transactions in $txnList$, the adjacency list of transactions will not contain any cycles due to the if-statement on line~\ref{l:ifstatementseries} of Algorithm~\ref{alg:serethSeries}. \textsc{DeepestBranch} will be invoked by the \textsc{Series} function no more than the number of transactions contained in $txnList$. For each invocation of \textsc{DeepestBranch}, a recursive call to \textsc{DeepestBranch} is made for each transaction in $head$'s list of successor transactions. Since \textsc{DeepestBranch} is only invoked on the successors of $head$, and each list in the adjacency list of transactions is finite, it is guaranteed that every invocation of \textsc{DeepestBranch} will eventually reach a transaction with no successors. Upon reaching a transaction with no successors, \textsc{DeepestBranch} terminates on line~\ref{l:terminate} of Algorithm~\ref{alg:serethSeries}.
\end{proof}

\section{Results}

This section shows experimental results of tests of the HMS algorithm on a private Ethereum blockchain.  The chain used for testing is a fork of an open source multi-peer private network configuration \cite{Chu2017-wv}.  Experiments were hosted on Ubuntu 16.04 EC2 servers in the AWS cloud.  The private network was configured to be a model of the Ethereum mainnet or the Ropsten testnet.  Proof of work was used as the consensus mechanism.  The block difficulty, transaction fees, processing power of the peers and peering topology were adjusted to produce block size and interval in the range of production Ethereum blockchains. 

Interoperability was tested by running experiments with a mix of peers running standard Geth and modified Sereth clients.  The first experiments were qualitative to demonstrate practical use of the two innovations of this paper: HMS and RAA.  Smart contract functions that created transactions were followed through the process of invocation, interpretation, transactions sent to the TxPool, replay, mining and validation.  The Sereth client operated interchangeably with Geth clients on the same network.  This is not surprising because Ethereum already supports a variety of clients with subtle differences, all following the same protocol.  Deployment of Sereth in the wild would not require a fork or any special permission from the network.  The Solidity smart contract equipped with RAA also functioned even when deployed to a Geth client, although of course the substitution of arguments did not take place and they were returned unchanged.  

Next we demonstrated that a sequential history was properly handled by sending a series of test transactions from the address of a single peer so that there is only one possible history, where real time order equals nonce order equals block order.  As expected, the transaction failure rate was zero and the transaction efficiency $\eta$ was 1.0.  
 
The quantitative experiments using concurrent peers demonstrated the effectiveness of HMS and the importance of transaction efficiency.  Experiments considered the history of program execution on a single shared variable P where P is an object containing the AMV tuple described in the HMS algorithm.  The dynamic pricing exchange from Section~\ref{SubSection:UseCases} is used to motivate the experiments, with the value of P representing the price.  Two transaction types are used in the experiments: $buy$ (buys one item at the current price) and $set$ (changes the price).  A ratio of buys (\textsc{Read-Uncommitted}) to sets (\textsc{Writes}) was used as a non-dimensional parameter that would scale up to larger servers as the absolute number of transactions increased.  The number of set transactions was varied from 100 to 5, yielding a buy to set ratio of from 1:1 to 20:1.  

Figure 2 depicts a plot of state throughput measured at different buy set ratios.  Each data point represents the result of 100 buy transactions, so state throughput is equivalent to $\eta$ expressed as a percentage. Transactions were submitted at an interval of one second, resembling a moderate throughput smart contract use case.  This interval was sufficient to demonstrate the problem of stale reads and can easily be reproduced with ordinary servers using the provided source code.  The sets are evenly spaced over the processing of the buys.  
The lines are smoothed averages of the points shown, with the shaded areas representing the 90 percent confidence interval for the lines.    
  
\subsection{Standard Geth client}

The baseline scenario sends transactions to an unmodified peer running the standard Geth client.  The transaction efficiency at different buy to set ratios is labeled as ``geth\_unmodified'' in Figure 2. In this scenario, buy transactions that read the price P from block $n-1$ and are included in block $n$ before the price is modified will be successful, while all other buys will fail.  When there are many price sets, as in the experiments with 1:1 and 2:1 ratios, only a few buys are successful. In some runs no buy transactions succeeded at all.  The efficiency increases somewhat as the ratio of buys to sets goes above 10:1 because there are more buys reading correct values before an intra-block set occurs.  However it remains poor for two reasons. 

First, with a low ratio of buys it is unlikely for a buy to land in the very beginning of the block before any sets take place.  Thus many will fail.  Second, even as the ratio increases, because of the large transaction pool there are often no buys going into block $n+1$ that have a valid view of block $n$.  Instead, block $n$ is assembled from buys that were submitted a few blocks ago, so they may have a view of block $n-2$,  $n-3$ or older blocks.  These buys fail because the blockchain state has passed them by before they were included.  This phenomenon is frequently observed in public blockchains \cite{Easley2018-yb}.

Although not shown in the plot, it was also observed that with few state changes (high ratio of buys) transaction efficiency becomes more sensitive to the transaction interval, as miners may sequence a large number of buys together.   

Sets are not plotted.  All of the sets succeed because they are sent from the owner of the contract and they do not depend on the previous price.  If sets came from different addresses some might have failed, but it is reasonable that the owner of the contract is the only one allowed to set the price.

\begin{figure}[t!]
    \centering
    \includegraphics[width=0.47\textwidth]{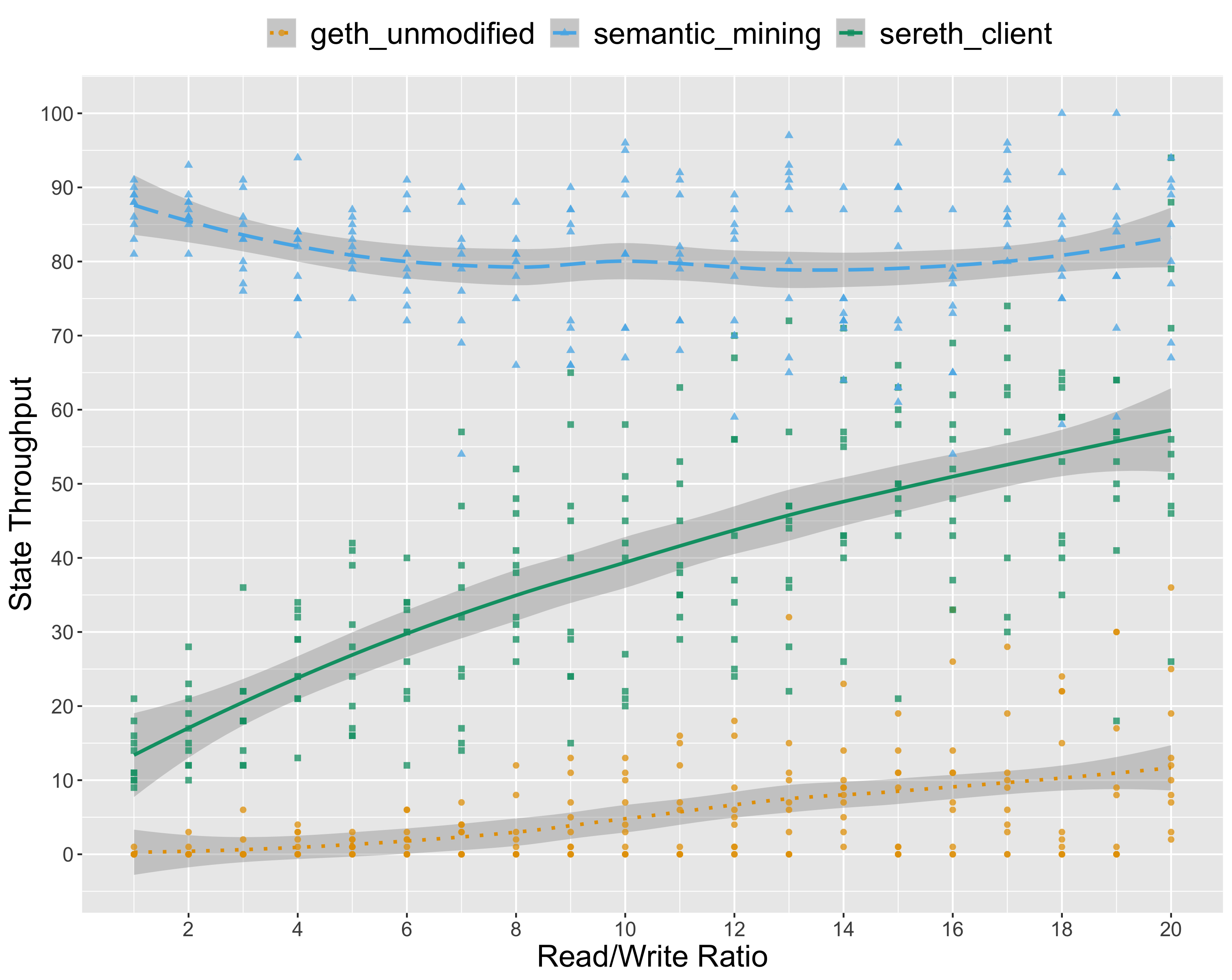}
    \caption{Transaction efficiency $\eta$ vs \textsc{Read-Uncommitted / Write} ratio}
    \label{Fig:TxnEfficiency}
\end{figure}

\subsection{Hash-Mark-Set without miner assistance}

The second experimental scenario, labeled as ``sereth\_client'' in Figure 2, used the modified Sereth client on the network, implementing the HMS algorithm.  The set transactions were ordered with HMS while buy transactions were sent exactly as in the baseline scenario.  Interleaved with the sets, any buy at the right mark and price succeeded.  The benefit of HMS in this scenario is that the buy transactions have a \textsc{Read-Uncommitted} view of the likely state of the storage variable P when they will be evaluated.  This allows many more transactions to succeed.  A sequentially consistent ordering of the set operations was established and their dependent buys have a view of the state provided by HMS.
Figure~\ref{Fig:TxnEfficiency} shows an improvement in throughput by approximately a factor of five over the entire range of read /write ratios.  These results were achieved without miner assistance, so they could be accomplished simply by running the modified client on the public Ethereum blockchain, as long as access to the smart contract was via these clients.

This experiment also demonstrates how HMS alleviates the intra-block lost update problem.  The FPV arguments in each buy include the previous mark, a hash that relates it to an interval between two sets.  If a sequence occurs such as: set(5), buy(5), set(7), set(5), buy(5), a particular buy(5) can prove that it was sent during the first or the second interval the price was set to 5.   Linking each buy transactions to a particular set price prevents the frontrunning attack mentioned in Section~\ref{SubSection:UseCases}. 

\subsection{Hash-Mark-Set with semantic mining}

In the third experimental scenario, the inputs of the second scenario were repeated with the miner using the HMS algorithm to determine the block order of transactions.  In this scenario HMS information about the TxPool is available to both smart contract users and miners.  Since the miner now has awareness of the semantics of the transactions, we call this \textit{semantic mining}.  Previously miners would not reorder transactions to increase transaction efficiency, but the semantic miners have this capability.  The line labeled ``semantic\_mining'' in Figure~\ref{Fig:TxnEfficiency} shows the results.  About 80 percent of transactions succeed due to semantic mining providing interleaving in conformance to the \textsc{Read-Uncommitted} view used by the smart contract clients when they sent the transactions.   Relative improvement in throughput was greatest with a high frequency of price changes, i.e. where there are 1 or 2 buys per set.  At these ratios the advantage of having the miner interleave transactions increases transaction efficiency from a few percent to almost 90 percent, resulting in a factor of six over the unassisted case.  Overall, 10-20 percent of transactions were lost due to the fact that the TxPool no longer contains marked transactions immediately after the block is published.  Transaction efficiency could approach 100 percent if HMS were extended to include the final values from replaying each block.  Other factors that would impact efficiency is if only a fraction of the miners were assisting, or if communication of the TxPool were impeded among the Sereth enabled peers.  Performance would be degraded in these cases but there would still be benefits proportional to the participation.

\section{Related Work}

%Gains possible by reparameterization of established chains are not enough \cite{Croman2016-mg}.  Other proposed solutions include sharding \cite{Luu2016-ux}, leader election with continuous serialization \cite{Eyal2016-fr}, and sidechains \cite{Back2014-qs} to create transaction channel networks \cite{Poon2015-lm}.  

%Because it is difficult to make major protocol changes on established blockchains, only the sidechain based solutions have been implemented at scale.  Like sidechains, our solution is interoperable and does not require protocol changes to implement.  Like \cite{Eyal2016-fr}, our solution continuously serializes uncommitted transactions to eliminate state change latency. Different from previous proposals concerned only with raw transaction throughput, our solution increases the throughput of successful transactions, which we define as \textit{state throughput}.  

There are five main approaches to improve blockchain throughput and latency: Reparameterization, sidechains, sharding, leader election and invalid state tagging.  

Reparameterization involves tuning the block size and interval to network bandwidth and peer computing power \cite{Croman2016-mg}.  HMS does not use reparameterization, but could influence tuning trade-offs by decreasing the significance of block interval. 

Sidechains \cite{Back2014-qs} increase throughput by creating transaction channel networks such as Lightning \cite{Poon2015-lm}.  As the name implies they exploit parallelism by running multiple chains, merging them to the main chain as needed to ensure correctness.  Sidechains have been implemented at scale on existing blockchains.   Recently generalized formally as \textit{state channels} \cite{Coleman2018-aj}, they can provide throughput gains of several orders of magnitude.  However the authors note state channels do not solve the latency, or as they call it, ``time granularity'' problem.  The \textsc{Read-Uncommitted} view provided by HMS does solve this for specific state variables.  

Sharding \cite{Luu2016-ux} increases throughput by isolating segments of the blockchain peer network.  Sharding requires changes to consensus protocol but has been accepted by Ethereum \cite{Buterin_Vitalik_2018-nk} with significant progress \cite{Prysmatic_Labs2018-du} and a target implementation date in 2020.  Like state channels, sharding is inherently parallel and offers performance gains of several orders of magnitude.  Sharding is a global solution but would need customization to address state throughput of individual smart contracts as does HMS.   

Bitcoin-NG ``Next Generation'' \cite{Eyal2016-fr} uses leader election with continuous serialization to modify the consensus protocol in proof of work blockchains such as Bitcoin.   Performance gains scaling to the limits of network latency and individual peer processing power are reported.  Recent work \cite{Yin2018-mb} notes the history of improvement and elaboration on the original proposal.  Our solution shares with Bitcoin-NG the concept of continuous serialization to reduce the ``long freeze'' of latency, but HMS does not require protocol changes to interoperate with current blockchains. 

The scope of our review was public blockchains, however a Byzantine Fault Tolerant (BFT) proposal for Hyperledger is relevant because it focused on the bottleneck of transaction signing and ordering in block creation \cite{Sousa2018-ps}.   BFT uses a leader to coordinate block creation achieving transaction rates of over 2000 per second on private blockchains.  Unlike Ethereum, Hyperledger tags transactions known to be based on invalid states before they are ordered in a block, so time is not wasted replaying the failed transactions.   However the authors do not consider transaction efficiency and the BFT ordering service does not use semantics to reduce failures as HMS does.

Software Transactional Memory (STM) algorithms have also been applied to blockchain throughput.  The original ideas in \cite{Dickerson2017-rn} using STM to enable concurrent processing of smart contract methods were continued by \cite{Anjana2018-nb}.  These researchers note the EVM is not parallel and the difficulty of determining transaction dependencies in a block, so in both papers smart contracts were translated into C++ which is supported by the STM library.  Simulated miners then interleave and order smart contract methods in STM to create a concurrent execution.  Speedups of up to 2x were achieved.   Concurrency based throughput gains in which ``any sequential execution will do'' are different from HMS, which sequences smart contract methods for transaction efficiency and increased state throughput. 

A parallel to HMS is found in an earlier STM paper \cite{Spear2008-nm} whose language about ``publishing'' is prescient as it was written before the blockchain was invented.  A correctness condition called Selective Strict Serialization (SSS) is introduced, in which some transactions are strictly serialized and others are not, but are marked to the serialized history.  In Section~\ref{Correctness} above we applied sequential consistency as the correctness condition for our HMS algorithm.  In our experiments HMS establishes a fixed ordering for the state changes (sets) while allowing the dependent transactions (buys) to be arbitrarily interleaved. Multiple buys can occur in a price interval and are not dependent upon each other.  Within the interval any order of buys is valid so they do not require an established ordering constraint.  Further work might show that SSS is a correctness condition suitable for HMS.

There is relevant work on improving throughput and latency of concurrent systems by reducing \textit{abort rate}, defined as how many times a transaction is retried before success \cite{Wu2016-yy, Chen2017-wl, Yuan2016-hx, Cohen2018-vb}.  This is different from our \textit{state throughout}, which measures efficiency of blockchain commits that are not repeated.  
%Performance degradation of a database system 
High abort rates due to delayed write visibility, where transaction writes may only be read after commit, is addressed by Faleiro et al.~\cite{faleiro2017high} in the proposal of piece-wise visibility (PWV), a deterministic concurrency control protocol designed to enable early write visibility. 
PWV divides a transaction into a set of sub-transactions which are scheduled to be executed in a serializable order.
Each sub-transaction write is made visible as soon as it commits, enabling the original transaction writes to be visible prior to commit time.
A DAG is used to order database sub-transactions based on data dependencies.
HMS uses a DAG to order blockchain transactions in a sequentially consistent fashion, and the final series of transactions is derived from the deepest branch.

The fundamental difference between PWV and HMS is that PWV enables writes to be visible inside the commit protocol while HMS enables % writes within a block to be visible 
\textsc{Read-Uncommitted} isolation 
% intra-block visibility 
for smart contracts through our proposed RAA technique, described in Section~\ref{SubSection:RAA}.   
The PWV commit protocol only provides write visibility after a transaction is submitted to the database system, which limits the potential performance gains in comparison to HMS that provides write visibility to smart contract clients such that the requested data from RAA can be utilized prior to transaction submission.

%Oracle related work...

\section{Conclusions}

State throughput, the throughput of successful transactions, is proposed as the appropriate metric for smart contract performance.  An algorithm, Hash-Mark-Set, and a novel architectural technique, Runtime Argument Augmentation, are presented and demonstrated together on the Ethereum blockchain to improve state throughput.

HMS provides smart contracts a \textsc{Read-Uncommitted} view of state.  At the same time, HMS provides information about transaction dependencies to the miners so they can adjust the block order, called semantic mining. Miners cooperating with smart contracts using the HMS algorithm to order dependent transactions were able to create blocks in which most transactions were successful.  This is demonstrated to improve transaction efficiency from less than 5 percent to over 80 percent in cases where state changes are frequent, more than an order of magnitude improvement in state throughput.  Even without semantic mining, the \textsc{Read-Uncommitted} view is helpful, increasing state throughput by a factor of five across the full range of tested read to write ratios from 1:1 to 20:1.  
Latency (as a function of correct reads) was also reduced in both scenarios, client modifications only  and semantic mining.  In addition to the performance gains, HMS solves the blockchain lost update and frontrunning attack problems because transactions using \textsc{Read-Uncommitted} values keep a unique hash validated record of the particular interval during which the value was read.  

RAA is presented as a new technique to provide smart contracts rapid communication with external data services.  In our experiments smart contracts used RAA to access \textsc{Read-Uncommitted} views of data necessary for transaction success and thus increase transaction efficiency.  RAA works at the architectural level of the EVM, using the interpreter to achieve high performance.  Peers running the RAA modified client were demonstrated to work interoperably with standard peers. 

\section*{Acknowledgment}
  Thanks to Raul Jordan and Andreas Olofsson for assistance in understanding the intricacies of Geth client software.

\balance

% Generated by IEEEtran.bst, version: 1.14 (2015/08/26)

\end{document}